\newtheorem{Theorem}{Theorem}
\newtheorem{Corollary}{Corollary}
\title{Entropic Uncertainty Relations via Direct-Sum Majorization Relation for Generalized Measurements}
\author{Kyunghyun Baek $^{1,2,}$*%\orcidA{}
	, Hyunchul Nha $^{1,}$* and Wonmin Son $^{3,}$*\\
\small $^{1}$  Department of Physics, Texas A$\&$M University at Qatar, Education City, P.O. Box 23874, {Doha}, Qatar\\
\small $^{2}$   Asia Pacific Center for Theoretical Physics, Pohang 37673, Korea\\
\small $^{3}$   Department of Physics, Sogang University, Mapo-gu, Shinsu-dong, Seoul 121-742, Korea\\
\small * baek1013@gmail.com (K.B.); hyunchul.nha@qatar.tamu.edu (H.N.);
sonwm@physics.org (W.S.)\\
}
\date{\vspace{-5ex}}
\begin{document}
\maketitle

\abstract{We derive an entropic uncertainty relation for generalized positive-operator-valued measure (POVM) measurements via a direct-sum majorization relation using Schur concavity of entropic quantities in a finite-dimensional Hilbert space. Our approach provides a significant improvement of the uncertainty bound compared with previous majorization-based approaches [S. Friendland, V. Gheorghiu and G. Gour, Phys. Rev. Lett. 111, 230401 (2013); A. E. Rastegin and K. \.Zyczkowski,  J. Phys. A, 49, 355301 (2016)], {particularly by extending the direct-sum majorization relation first introduced in [\L. Rudnicki, Z. Pucha\l{}a and K.  \.{Z}yczkowski, Phys. Rev. A 89, 052115 (2014)]}. We illustrate the usefulness of our uncertainty relations by considering a pair of qubit observables in a two-dimensional system and randomly chosen unsharp observables in a three-dimensional system. We also demonstrate that our bound tends to be stronger than the generalized Maassen--Uffink bound with an increase in the unsharpness effect. Furthermore, we extend our approach to the case of multiple POVM measurements, thus making it possible to establish entropic uncertainty relations involving more than two observables.}
\\

\noindent Keyword: {entropic uncertainty relations; direct-sum majorization relation; positive-operator-valued~measure}

\section{Introduction}

Ever since Heisenberg introduced the uncertainty principle \cite{Heisenberg1927}, it has laid at the heart of quantum physics as one of the fundamental principles manifesting a profound distinction between quantum and classical physics. Early formulations of uncertainty relations (URs) were made on the basis of statistical variance by Kennard \cite{Kennard1927}, Weyl \cite{Weyl1928}, and Robertson \cite{Robertson1929}. These variance-based URs clearly indicate an inherent limitation to preparing a quantum state with a narrow distribution in both position and momentum observables simultaneously.  In addition, they provided a useful insight into developing URs in terms of other quantum state statistical characteristics, such as mixedness \cite{Dodonov2002} and non-Gaussianity \cite{Baek2018,Son2015,Mandilara2012}, {and into developing entanglement criteria for general quantum systems {\cite{Nha2006,Nha2008,Sun2009,Nha2012}}}.

In finite-dimensional Hilbert spaces, however, D. Deutsch pointed out a drawback of Robertson's bound due to its state-dependent {nature. Indeed, Robertson's bound} may even vanish for certain quantum states with noncommuting observables, thus yielding no meaningful uncertainty relation~\cite{Deutsch1983}. Alternatively, he derived the entropic uncertainty relation (EUR) by using Shannon entropy as an information-theoretical measure of uncertainty.
His seminal work was further improved with the Maassen--Uffink EUR \cite{Uffink1988} following {Kraus'} conjecture \cite{Kraus1987}. {This EUR was} subsequently extended to the case of generalized measurements \cite{Krishna2002}. {Also, it was generalized to} general entropy functions, such as those of Tsallis \cite{Rastegin2013} and R\'enyi \cite{Bialynicki-Birula2006}. Another important advantage to using the information-theoretic approach is that the entanglement effect can be incorporated into the uncertainty paradigm by introducing the concept of quantum memory \cite{Berta2010,Tomamichel2011,Coles2012}. Those EURs form crucial key elements in {detecting entanglement and} proving the security of quantum cryptography, as extensively reviewed in {\mbox{\cite{Wehner2010,Coles2017,Toscano2018}}}. {More recently, it has been discovered that the EURs with quantum memory allow for trade-offs between the concepts of quantum uncertainty and reality for quantum observables \mbox{\cite{Rudnicki2018}}.}

Despite the successful formulation of the uncertainty principle via entropy functions, one may ask whether those specific functions are the ultimate measure of uncertainty.
Beyond specific functions quantifying the degree of uncertainty, URs that are universally applicable to any appropriate uncertainty functions were introduced by using the concept of majorization in \cite{Partovi2011,Friedland2013,Puchala2013}. This approach can be briefly described as follows.
For a pair of probability vectors ${p}$ and $q$, if one can obtain ${p}$  {by making a doubly stochastic matrix $S$ act on $q$}, i.e., ${p}= S q,$
where $S$ is a square matrix whose elements are positive values satisfying $\sum_{i}S_{ij}=\sum_{j}S_{ij}=1$,  $p$ is said {to be majorized by $q$. This is expressed as {\cite{Marshall2011}}}
\begin{align}\label{Maj}
{p}\prec q.
\end{align}

In this case, one may say that ${p}$ is more uncertain than ${q}$, since {the action of a doubly stochastic matrix always makes} a probability distribution more equally distributed. Thus, if a function $f$ is a legitimate measure of uncertainty, it should preserve the partial order indicated by the majorization relation,  i.e., {$f(p)\geq f(q)$} \cite{Uffink1990}, such as R\'enyi and Tsallis entropies.
This majorization-based UR provides universal applicability to any appropriate uncertainty functions with such an uncertainty-order preserving property. {Besides uncertainty relations}, {the concept of majorization is applied to various topics, such as quantum thermodynamics \mbox{\cite{Horodecki2018}} and coherence \mbox{\cite{Zhu2017}}.}

The majorization-based UR was first derived on the basis of the tensor-product majorization \mbox{relation~\cite{Friedland2013,Puchala2013}}. Subsequently, it was applied to {the} direct-sum majorization relation {for rank-1 projective measurements} in {{\cite{Rudnicki2014}}}, providing {stronger bounds for the sum of two entropies} than the former one{, and extended to projection-valued measures in {\cite{Rudnicki2015}}}. Its extension to generalized measurements was also investigated in the tensor-product majorization relation \cite{Friedland2013} and, more recently, in the direct-sum formulation \cite{Rastegin2016}.
However, unlike the case of projective measurements,  {there has not been an extensive examination of whether the direct-sum majorization still provides stronger bounds than the tensor-product one for unsharp positive-operator-valued measure (POVM) measurements.}
In this paper, we propose a new generalization of the direct-sum majorization relation to general POVM measurements. {As the direct-sum majorization relation provides stronger bounds for the case of projective measurements \mbox{\cite{Rudnicki2014,Rudnicki2015,Rastegin2016}}, we show that for general POVM measurements, our generalization improves upon the previously established bounds found in the literature.} We illustrate it by considering a pair of qubit observables in two-dimensional systems and also randomly chosen observables in three-dimensional systems through extensive numerical calculations.

This paper is organized as follows. In Section \ref{sec2},  we briefly introduce the basic concepts and terminologies necessary for our work. {We further} review recent results on majorization-based URs, with a particular focus on the case of generalized measurements. In Section \ref{sec3}, we obtain a direct-sum majorization relation for general POVM measurements and subsequently establish EURs in terms of R\'enyi and Tsallis entropies, including the Shannon entropy. In Section \ref{sec4}, we illustrate the power of our approach by comparing our bound with other known bounds using observables in two-dimensional and three-dimensional systems.  In Section \ref{sec5}, we further extend our approach to obtain a direct-sum majorization relation involving multiple POVM measurements, and we establish the corresponding EURs.

\section{Preliminaries}\label{sec2}

%In this section, we clarify basic concepts and terminologies necessary for our work.

%\subsection{Positive operator-valued measure}

A generalized measurement $A$ can be described by a positive-operator-valued measure (POVM), which is a set of positive operators $\{\hat A_i\}_{i=1}^{n_A}$ satisfying the completeness relation, $\sum_{i=1}^{n_A} \hat A_i=\hat I$, where $n_A$ is the number of different outcomes. In a general scenario in which a quantum state described by a density operator $\hat \rho$ is measured by $A$, {the} probability to obtain the $i$th outcome is given by $$p_i^A=\text{Tr}[\hat\rho \hat A_i].$$
If all elements of a POVM $A$ are orthogonal {to} each other, i.e., $\hat A_i \hat A_j=\delta_{ij}\hat A_i $, or, equivalently, each element is given by a projection, then it is called a projection-valued measure (PVM). Furthermore, in {the} most ideal case, a set of projections provides orthogonal bases, and it is referred to {as} rank-1 PVM.

%\subsection{Entropy}

In an information-theoretic approach, the amount of uncertainty induced by a generalized measurement can be quantified using entropic quantities, such as R\'enyi and Tsallis entropies.
The R\'enyi entropy is defined as
\begin{align}\label{Renyi}
H_{\alpha}(p)=\frac{1} {1-\alpha} \ln\sum_i p_i^\alpha
\end{align}
for $\alpha>0$ with $\alpha\neq1$. In the limit $\alpha\rightarrow 1$, it reduces to the Shannon entropy $H(p)=-\sum_i p_i\ln p_i$. We note that the R\'enyi entropy monotonically decreases with respect to the order $\alpha$.
The Tsallis entropy is also defined for $\alpha>0$, $\alpha\neq1$, as
\begin{align}\label{Tsallis}
T_{\alpha}(p)=\frac{1} {1-\alpha} \left(\sum_i p_i^\alpha-1\right).
\end{align}
Similar to R\'enyi entropies, the Tsallis entropy corresponds to the Shannon entropy  at $\alpha=1$.

{Now, let us introduce} {an equivalent	way to define the majorization relation in Equation {\eqref{Maj}} by means of a set} {of inequalities, which is more useful in the derivation of our results.}
{Suppose that the probability vector ${p}^A_\downarrow=(p_{[1]}^A,p_{[2]}^A,...,p_{[n]}^A)^T$ denotes the rearrangement of $p^A=(p_1^A,p_2^A,...,p_n^A)$ in decreasing order, i.e., $p_{[1]}^A\geq p_{[2]}^A\geq...\geq p_{[n]}^A$, and likewise for $p^B_\downarrow$.}
If they satisfy {{\cite{Marshall2011}}}
\begin{align}\label{MajIneq}
\sum_{i=1}^{k} p_{[i]}^A \leq \sum_{i=1}^{k} p_{[i]}^B
\end{align}
for all $1\leq k\leq n$, {along} with the normalization condition, ${p}^A$ is said to be majorized by ${p}^B$, expressed as
$
{p}^A\prec {p}^B.
$
 {Observe that in order to have the majorization $p^A\prec p^B$, it is enough that $\sum_{i=1}^k p_{[i]}^A \leq \sum_{i=1}^k p_{i}^B$ for any $k$; i.e., in Equation {\eqref{MajIneq}}, the ordered components $p_{[i]}^B$ can be replaced by the unordered ones $p_i^B$, since $\sum_{i=1}^k p_{i}^B \leq \sum_{i=1}^k p_{[i]}^B$.}
As noted earlier, an {appropriate} uncertainty function should give {a smaller value} for ${p}^B$.
 {Schur concave functions are the class of functions preserving this order.}
We note that both R\'enyi and Tsallis entropies are Schur concave, thus preserving the partial order induced by majorization. By utilizing Schur concavity of entropic quantities, one can derive {EURs} from the majorization relation---the so-called majorization EURs.

Majorization EURs for generalized measurements were established first on the basis of the tensor-product majorization relation. For probability vectors ${p}^A$ and ${p}^B$ associated with POVMs $A=\{\hat A_i\}_{i=1}^{n_A}$ and $B=\{\hat B_j\}_{j=1}^{n_B}$, respectively, the tensor-product majorization relation introduced in \cite{Friedland2013,Puchala2013} turns out to be
\begin{align}
p^A \otimes p^B \prec w_t,
\end{align}
where $p^A \otimes p^B=(p^A_1 p^B_1,...,p^A_1 p^B_{n_B},...,p^A_{n_A} p^B_{1},...,p^A_{n_A} p^B_{n_B})^T$ is the $n_An_B$-dimensional joint probability vector.
Here, {the} majorizing vector $w_t$ is given by
\begin{align}\label{wt}
w_t=\left(\frac{s_2^2} {4},\frac{s^2_3-s^2_2} {4},...,\frac{s_N^2-s_{N-1}^2} {4},0,...,0\right)^T
\end{align}
with the total number of measurement outcomes $N=n_A+n_B$ and the {coefficients}
\begin{align}\label{s}
s_k:= \max_{\substack{ \mathcal R, \mathcal S\\  |\mathcal R|+|\mathcal S|=k }}\left\|{\sum_{i\in \mathcal R}\hat A_i}+{\sum_{j\in \mathcal S}\hat B_j}\right\|.
\end{align}
Here, $\|\cdot\|$ is the operator norm equal to {the} largest singular value, and $\mathcal R \subset\{1,...,n_A\}$ and $\mathcal S \subset\{1,...,n_B\}$, with $|\mathcal R|$ indicating the number of elements of $\mathcal{R}$.
Because of the additivity of entropic quantities, it is  straightforward to derive  {EURs} in terms of the R\'enyi  entropy~as
\begin{align}\label{tensorEUR}
&H_\alpha(A)+H_\alpha(B)\geq H_\alpha(w_t)\equiv \mathcal B_t.
\end{align}
Note that the bound $\mathcal B_t$ is determined only by the considered  {POVMs $A$ and $B$, which give} a state-independent bound in Equation \eqref{tensorEUR}.

More recently, the direct-sum majorization relation was considered for the case of rank-1 PVMs in \cite{Rudnicki2014} and generalized to POVMs in \cite{Rastegin2016}. To address this approach, let us introduce  {the $n_A d \times n_B d$ block matrix $\mathsf{X}$ consisting of $d \times d$ blocks $\mathsf{X}_{ij} = \sqrt{\hat A_i} \sqrt{\hat B_j}$, given by}
\[\mathsf X =
\begin{pmatrix}
\sqrt{\hat A_1}\sqrt{\hat B_1}  & \cdots & \sqrt{\hat A_1}\sqrt{\hat B_{n_B}} \\
\vdots    & \ddots & \vdots  \\
\sqrt{\hat A_{n_A}}\sqrt{\hat B_{1}} & \cdots & \sqrt{\hat A_{n_A}}\sqrt{\hat B_{n_B}}
\end{pmatrix}.\]
{Here and going forward, $d$ is the dimension of the Hilbert space of the system.}
This matrix includes all combinations of POVM elements between $A$ and $B$.
We also define the set of {block} submatrices such that
\begin{equation}
\begin{aligned}
\mathcal{SUB}(\mathsf X,k)=&\{\mathsf Z\in \mathcal{M}_{rd\times r'd}(\mathbb{C}):\\
&\text{$\mathsf Z$ is a submatrix of  {$\mathsf X$ made up of $d\times d$ blocks.}}, \\
& r+r'-1=k \} ,
\end{aligned}
 \end{equation}
where $\mathcal{M}_{rd\times r'd}(\mathbb{C})$ denotes the space of all $rd \times r'd$ complex matrices, {and $1\leq k\leq n_A+n_B-1$ is a positive integer}.

With the above definitions, the direct-sum majorization relation turns out to be {\mbox{\cite{Rudnicki2014,Rastegin2016}}}
\begin{align}\label{Direct0}
p^A\oplus p^B\prec (1)\oplus w_d,
\end{align}
where $p^A\oplus p^B=(p_1^A,...,p_{n_A}^A,p_1^B,...,p_{n_B}^B)$. Here, the majorizing vector $w_d$ is given by
\begin{align}
w_d=(c_1,c_2-c_1,...,c_{N-1}-c_{N-2})^T
\end{align}
{with the} coefficients
\begin{align}\label{c}
c_k=\max\{\|\mathsf Z\|: \mathsf Z\in \mathcal{SUB}(\mathsf X,k)\}.
\end{align}
{It is worth noting that the majorization relation is applicable to unnormalized nonnegative vectors if the sum of the vector components has the same value; for instance, the components of each vector in Equation {\eqref{Direct0}} sum to 2. Furthermore, for a pair of vectors with different lengths, one can adjust} {the majorization relation} {by adding zeros to additional coordinates of the shorter vector, such as $w_t$ in Equation {\eqref{wt}}.}

The direct-sum majorization relation allows one to derive the following {EURs} \cite{Rudnicki2014,Rastegin2016}.  {For R\'enyi entropies of order $0 < \alpha \leq 1$},  {we have}
\begin{align}\label{REURwd}
H_\alpha(A)+H_\alpha(B)\geq H_{\alpha}(w_d)\equiv \mathcal B_{d1}.
\end{align}
For $\alpha>1$, they satisfy another form of inequalities:
\begin{align}
H_\alpha(A)+H_\alpha(B)\geq \frac{2} {1-\alpha}\ln \left(\frac{1} {2}+\frac{1} {2}\sum_{i=1}^d (w_d)_i^\alpha\right).
\end{align}
For Tsallis entropies  {of any order $\alpha>0$, a relation of the same form as Equation {\eqref{REURwd}} turns out to~be}
\begin{align}
T_\alpha(A)+T_\alpha(B)\geq T_{\alpha}(w_d).
\end{align}

{The} approach to quantify the incompatibility {based on the matrix $\mathsf{X}$} was introduced {for a pair of rank-1 PVMs in \mbox{\cite{Puchala2013,Rudnicki2014}}. Therein, the matrix $\mathsf{X}$ is just the unitary matrix connecting the two orthonormal bases associated with the respective PVMs. Subsequently, in {\cite{Rastegin2016}}, this approach was extended to POVMs by means of the matrix $\mathsf{X}$ defined above. Since this quantification takes into account many different operator norms defined by $c_k$ in Equation {\eqref{c}}, it is an} extension of the Maassen--Uffink bound that is only determined by the largest operator norm \cite{Krishna2002}:
\begin{align}\label{MU}
\mathcal B_{MU}=-2 \ln \max_{ij} \left\| \sqrt{\hat A_i}\sqrt{\hat B_j} \right\|.
\end{align}
{However, this extension does not always provide stronger bounds than $\mathcal B_{MU}$. Despite its simple and intuitive form, the Maassen--Uffink bound is complementary to the majorization EURs, particularly for mutually unbiased bases. Furthermore, $\mathcal B_{MU}$ was improved on the basis of the Landau--Pollak inequality for rank-1 PVMs in \mbox{\cite{Vicente2008,Bosyk2011}}.}  {The improved bounds contained therein were subsequently} {extended to the case of POVMs in \mbox{\cite{Bosyk2014,Zozor2014}}.}

{For the case of rank-1 PVMs, the coefficients $s_k$ and $c_k$ are related by the following equality {\cite{Puchala2013}}}: $$s_{k+1}=1+c_k$$
{for $k=1,2,...,d$. This relation allows us to analytically compare the majorizing vectors $w_{t}$ and $w_{d}$} {since it gives} {$\sum_{i=1}^k (w_{t})_i=(1+c_k)^2/4 \geq \sum_{i=1}^k (w_{d})_i= c_k$ as a result of the inequality of arithmetic and geometric means. Note that only this inequality does not imply majorization since} {$w_d$} {is not sorted in decreasing order.}  {The following majorization relation was rigorously proved in \cite{Rudnicki2014}:}
\begin{align}\label{MajorDirectTensor}
w_d \prec w_t.
\end{align}
{This implies that the direct-sum majorization relation gives improved bounds for rank-1 PVMs.} Thus, we have $H_\alpha(w_d)\geq H_\alpha(w_t)$ for $0<\alpha\leq 1$ and $T_\alpha(w_d)\geq T_\alpha(w_t)$ for $\alpha>0$. However, this improvement is not observed in the generalization to POVMs \cite{Rastegin2016} {(see Section {\ref{sec4}} for extensive investigations in the qubit case)}. Thus, our main purpose is to find {some} new generalization of the direct-sum majorization relation that also gives {an} improvement relative to the existing EURs in the POVM~case.

%%%%%%%%%%%%%%%%%%%%%%%%%%%%%%%%%%%%%%%%%

\section{Direct-Sum Majorization Relations for General POVM}\label{sec3}

In this section, we suggest a new generalization of the direct-sum majorization relation to POVMs. By utilizing it, we further derive {EURs} for R\'enyi and Tsallis entropies.
For the case of rank-1 PVMs, the direct-sum majorization relation was derived in \cite{Rudnicki2014} {and extended to PVMs {\cite{Rudnicki2015}} and POVMs {\cite{Rastegin2016}}}. The main idea of the derivation of the direct-sum majorization relation {in {\cite{Rudnicki2014}}} is to find {the} majorizing vector by taking the largest operator norm {of sums of rank-1 PVM elements. We} apply this idea to the case of POVMs as follows.

\begin{Theorem}
	For POVMs $A$ and $B$, we have the majorization relation
	\begin{align}\label{Direct}
	p^A\oplus p^B  \prec  W
	\end{align}
	where $p^A$ and $p^B$ are the probability vectors whose elements are defined as $p^A=(p^A_1,p^A_2,...,p^A_{n_A})^T${}, $p^B=(p^B_1,p^B_2,...,p^B_{n_B})^T${}, and {the} $N$-dimensional vector $W$ is defined as
	\begin{align}\label{W}
	W=(s_1,s_2-s_1,...,s_N-s_{N-1})^T
	\end{align}
	with $N=n_A+ n_B$.
	%The elements of $W$ for POVM is defined by
	%$$s_k:= \max_{\substack{ \mathcal R, \mathcal S\\  |\mathcal R|+|\mathcal S|=k }}\left\|{\sum_{i\in \mathcal R}\hat A_i}+{\sum_{j\in \mathcal S}\hat B_j}\right\|,$$
	%where $\mathcal R$ and $\mathcal S$ are subsets of $\{0,1,2,...,n_A\}$ and $\{0,1,2,...,n_B\}$, respectively.
\end{Theorem}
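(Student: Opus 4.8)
The plan is to verify the majorization relation~\eqref{Direct} through the partial-sum criterion for majorization recalled around Equation~\eqref{MajIneq}, exploiting the remark made there that it suffices to bound the ordered partial sums of $p^A\oplus p^B$ by the \emph{unordered} partial sums $\sum_{i=1}^k W_i$ of the majorant. Since the components of $W$ telescope, $\sum_{i=1}^k W_i = s_1+(s_2-s_1)+\dots+(s_k-s_{k-1}) = s_k$, so everything reduces to proving that, for an arbitrary fixed state $\hat\rho$ and every $k\in\{1,\dots,N\}$, the sum of the $k$ largest components of $p^A\oplus p^B$ does not exceed $s_k$, and that the two vectors share the same total ($=2$).

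Fixing $k$, I would let $\mathcal R\subseteq\{1,\dots,n_A\}$ and $\mathcal S\subseteq\{1,\dots,n_B\}$ be the index sets picking out those entries of $p^A$ and $p^B$ that form the $k$ largest components of $p^A\oplus p^B$; then $|\mathcal R|+|\mathcal S|=k$ (possible ties are immaterial, as any admissible choice works). Using $p_i^A=\mathrm{Tr}[\hat\rho\hat A_i]$, $p_j^B=\mathrm{Tr}[\hat\rho\hat B_j]$ and linearity of the trace, this partial sum equals $\mathrm{Tr}[\hat\rho\,\hat M]$ with the positive semidefinite operator $\hat M := \sum_{i\in\mathcal R}\hat A_i + \sum_{j\in\mathcal S}\hat B_j$. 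The central estimate is $\mathrm{Tr}[\hat\rho\,\hat M]\le\|\hat M\|$: expanding $\hat M$ in its eigenbasis exhibits $\mathrm{Tr}[\hat\rho\,\hat M]$ as a convex combination of the eigenvalues of $\hat M$, hence bounded above by $\lambda_{\max}(\hat M)=\|\hat M\|$ (the last equality using $\hat M\ge0$). By the very definition~\eqref{s} of $s_k$ as the maximal such operator norm over all $(\mathcal R,\mathcal S)$ with $|\mathcal R|+|\mathcal S|=k$, we obtain $\mathrm{Tr}[\hat\rho\,\hat M]\le s_k$, which is precisely the required partial-sum inequality. Taking $k=N$ forces $\mathcal R=\{1,\dots,n_A\}$ and $\mathcal S=\{1,\dots,n_B\}$, so $\hat M=2\hat I$ and $s_N=2$, matching $\sum_i p_i^A+\sum_j p_j^B=2$; moreover, since enlarging $(\mathcal R,\mathcal S)$ by one element only adds a further positive operator to $\hat M$ and hence cannot decrease $\|\hat M\|$, we have $s_{k-1}\le s_k$, so $W$ is itself a nonnegative vector with total $2$. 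These facts together give $p^A\oplus p^B\prec W$.

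I do not anticipate a serious obstacle: the crux is simply that the operator norm of a sum of POVM elements upper-bounds the corresponding sum of outcome probabilities, which is the natural POVM counterpart of the rank-1 PVM argument of~\cite{Rudnicki2014} where the relevant operators were sums of orthogonal projections. The only point needing a little care is logical rather than computational --- one must confirm that establishing the scalar inequalities $\sum_{i=1}^k(p^A\oplus p^B)_{[i]}\le s_k$ for all $k$, together with equality at $k=N$, really does yield the full majorization~\eqref{Direct}; this is guaranteed by the observation after Equation~\eqref{MajIneq}, because $s_k=\sum_{i=1}^k W_i\le\sum_{i=1}^k W_{[i]}$.
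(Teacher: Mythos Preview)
Your proposal is correct and follows essentially the same approach as the paper's proof: both establish $\sum_{i=1}^k (p^A\oplus p^B)_{[i]}\le s_k$ by writing the partial sum as $\mathrm{Tr}[\hat\rho\,\hat M]$ for a suitable sum $\hat M$ of POVM elements and bounding this by the operator norm, then invoking the definition of $s_k$ and the completeness relation for $k=N$. If anything, your write-up is slightly more explicit than the paper's in spelling out the telescoping $\sum_{i=1}^k W_i=s_k$, the monotonicity $s_{k-1}\le s_k$ ensuring $W\ge0$, and the justification that comparing against the unordered partial sums of $W$ suffices.
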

Note that the {coefficients} $s_j$ $(j=1,\cdots,N)$ in Equation \eqref{W} are the same as those defined in Equation \eqref{s}.
\begin{proof}[Proof of Theorem \mbox{{1}}]
	Let us {assume that $z_\downarrow=(z_{[1]},z_{[2]},...,z_{[N]})^T$ is the rearrangement of  $z=p^A\oplus p^B$ in decreasing order.}  Then, the largest element of $z_\downarrow$ is {either} $p^A_{[1]}$ or $p^B_{[1]}$. In each case, we have inequalities
	\begin{align}
	p^A_{[1]}=\text{tr}[\hat\rho \hat A_{[1]}]\leq \max_i [\|\hat A_i\|]\nonumber,\\
	p^B_{[1]}=\text{tr}[\hat\rho \hat B_{[1]}]\leq \max_j [\|\hat B_j\|]\nonumber,
	\end{align}
	for any density operator $\hat\rho$ as a result of {the definition of the operator norm, $\|\hat A_i\|=\max_{|\psi\rangle} \langle \psi|\hat A_i|\psi\rangle$.} Here, the upper bounds are lower than 1 by the definition of a POVM.  Then, the summation of the first and the second largest element{s} of $z_\downarrow$ has three possible combinations in each case satisfying following~ inequalities
	\begin{align}
	p^A_{[1]}+p^A_{[2]}&=\text{tr}[\hat\rho(\hat A_{[1]}+\hat A_{[2]})]\leq \max_{\substack{ \mathcal R\\  |\mathcal R|=2 }}\left\|\sum_{i\in \mathcal R}\hat A_i\right\|\nonumber,\\
	p^A_{[1]}+p^B_{[1]}&=\text{tr}[\hat\rho(\hat A_{[1]}+\hat B_{[1]})]\leq \max_{\substack{ \mathcal R, \mathcal S\\  |\mathcal R|=|\mathcal S|=1 }}\left\|\sum_{i\in \mathcal R}\hat A_i+\sum_{j\in \mathcal S}\hat B_j\right\|\nonumber,\\
	p^B_{[1]}+p^B_{[2]}&=\text{tr}[\hat\rho(\hat B_{[1]}+\hat B_{[2]})]\leq \max_{\substack{ \mathcal S\\  |\mathcal S|=2 }}\left\|\sum_{j\in \mathcal S}\hat B_j\right\|\nonumber,
	\end{align}
	where $\mathcal R \subset\{1,...,n_A\}$ and $\mathcal S \subset\{1,...,n_B\}$. {Each inequality comes from the definition of the operator norm.} In the same way, the summation of $k$ elements of vectors has an upper bound as follows:
	\begin{align}
	\sum_{i=1}^k z_{[i]}\leq  \max_{\substack{ \mathcal R, \mathcal S\\  |\mathcal R|+|\mathcal S|=k }}\left\|\sum_{i\in \mathcal R}\hat A_i+\sum_{j\in \mathcal S}\hat B_j\right\|=s_k\nonumber.
	\end{align}
Because of the completeness relation, we have $s_{N}=2$.

	Now, we can construct {the} $N$-dimensional majorizing vector
	$W=(s_1,s_2-s_1,s_3-s_2,...,s_{N}-s_{N-1})^T$ that satisfies the direct-sum majorization relation
	\begin{align}
	\sum_{i=1}^k z_{[i]}\leq \sum_{i=1}^k W_i
	\end{align}
	for arbitrary $k$.
	\end{proof}

{The majorizing vector $W$} {coincides with the vector} {$(1)\oplus w_d$ for the case of rank-1 PVMs since $s_{k+1}=1+c_{k}$, as derived in {\cite{Puchala2013}}, together with $s_1=1$.} {However, for general POVMs, the equality is replaced with the inequality, i.e., $s_{k+1}\leq1+c_{k}$.} {This fact implies that distinct behaviors of $W$ from $(1)\oplus w_d$ may be observed for unsharp observables that cannot be described by PVMs.}
Significant distinctions between the direct-sum majorization relation in Equation \eqref{Direct} and the previous one in Equation \eqref{Direct0} are encapsulated in the following relation {{\cite{Rastegin2016}}}:
\begin{align}\label{MajVectos}
\sum_{i=1}^k z_{[i]}\leq \sum_{i=1}^k W_i \leq \sum_{i=1}^k \left( (1)\oplus w_d\right)_i
\end{align}
for a given  {$k\in[1,N]$}, where {$z_\downarrow=(z_{[1]},z_{[2]},...,z_{[N]})^T$ is} {the} {vector $z=p^A\oplus p^B$ ordered decreasingly, and} $(1)\oplus w_d=(1,c_1,c_1-c_2,...,c_{N-1}-c_{N-2})^T$.  We note that for any pair of $A$ and $B$, one can find a state saturating the first inequality in Equation  \eqref{MajVectos} for  {each} $k$, since {$\sum_{i=1}^k W_i=s_k$} is defined {by} taking the largest {eigenvalue of all possible sums of $k$ POVM	elements.}

It is worth noting that the inequalities in Equation  \eqref{MajVectos} were actually mentioned in Reference \cite{Rastegin2016}, but it was further claimed that equality holds in the second inequality. However, there are cases {where} $w_d$ does not coincide with the majorizing vector $W$, but they become equivalent for rank-1 PVMs. For instance, let us consider qubit observables $A=\{(\hat I\pm \mu \hat\sigma_x)/2\}$ and $B=\{(\hat I\pm \mu \hat\sigma_z)/2\}$, {where $\hat \sigma_x$ and $\hat \sigma_z$ denote the Pauli matrices and $0 \leq\mu\leq 1$ is an unsharpness parameter.} The majorizing vector given by  {$W=((1+\mu)/2,(1+(\sqrt{2}-1)\mu)/2,(1-(\sqrt{2}-1)\mu)/2,(1-\mu)/2)^T$}
has a specific state for each $k$ saturating the first inequality, while
$(1)\oplus w_d=(1,c_1,c_2-c_1,c_3-c_2)^T$ with $c_1=(\mu+\sqrt{2-\mu^2})/2\sqrt{2}$, $c_2=\sqrt{(1+\mu)/2}$, and $c_3=1$ {in {\cite{Rastegin2016}}} does not.
More details about {the} difference between the two majorizing vectors $W$ and $ (1)\oplus w_d$ are given in Section \ref{sec4} by explicitly showing that our EUR performs better than the previous one.
{However, we note that one cannot  infer $W\prec(1)\oplus w_d$ from  Equation {\eqref{MajVectos}}, since the vectors $W$ and $(1)\oplus w_d$ are not sorted in decreasing order.}

The direct-sum majorization relation in Equation \eqref{Direct} allows one to {derive the following EURs by means of the mathematical techniques employed in {\cite{Rudnicki2014}}.}

\begin{Corollary}
	For a pair of POVMs $A$ and $B$, we have {the following} entropic uncertainty relations for R\'enyi entropies of order $0<\alpha\leq1$:
	\begin{align}\label{MajPOVM1}
	H_\alpha(A)+H_\alpha(B)\geq \frac{1} {1-\alpha} \ln \left(\sum_{i=1}^N (W_i)^\alpha-1\right)\mbox{{$\equiv\mathcal B_{d2}$}},
	\end{align}
	and for $\alpha>1$
	\begin{align}\label{MajPOVM2}
	H_\alpha(A)+H_\alpha(B)\geq \frac{2} {1-\alpha} \ln \left(\frac{\sum_{i=1}^N (W_i)^\alpha} {2}\right).
	\end{align}
\end{Corollary}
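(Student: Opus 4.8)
The plan is to deduce both inequalities from the single majorization relation of Theorem~1 combined with the Schur concavity (resp.\ Schur convexity) of the power sum $g_\alpha(x)=\sum_i x_i^\alpha$. The first step is to isolate what must be controlled: writing $a=\sum_i (p_i^A)^\alpha$ and $b=\sum_j (p_j^B)^\alpha$, additivity of the R\'enyi entropy gives $H_\alpha(A)+H_\alpha(B)=\tfrac{1}{1-\alpha}\ln(ab)$, while the power sum of the direct-sum vector merely splits, $g_\alpha(p^A\oplus p^B)=a+b$. So the whole problem is to convert information about the sum $a+b$, which majorization controls, into a bound on the product $ab$.

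Next I would invoke Theorem~1. For $0<\alpha\le1$ the map $t\mapsto t^\alpha$ is concave, so $g_\alpha$ is symmetric and concave, hence Schur concave; feeding $p^A\oplus p^B\prec W$ into it yields $a+b=g_\alpha(p^A\oplus p^B)\ge g_\alpha(W)=\sum_i W_i^\alpha=:S$. I would also record two elementary facts: since each $p_i^A,p_j^B\in[0,1]$ (by the POVM/normalization condition) and $\alpha\le1$, we have $(p_i^A)^\alpha\ge p_i^A$, so $a\ge\sum_i p_i^A=1$ and likewise $b\ge1$; and since $s_i\le s_{i-1}+1$ (remove one POVM element from the set realizing $s_i$ and apply the triangle inequality together with $\|\hat A_i\|,\|\hat B_j\|\le1$), every $W_i=s_i-s_{i-1}$ lies in $[0,1]$, so $S=\sum_i W_i^\alpha\ge\sum_i W_i=2$.

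The two cases then close with a one-line optimization. For $0<\alpha\le1$: over the region $\{a\ge1,\ b\ge1,\ a+b\ge S\}$ with $S\ge2$, the product $ab$ has no interior critical point and attains its minimum on the boundary at $(1,S-1)$, so $ab\ge S-1$; dividing $\ln(ab)\ge\ln(S-1)$ by $1-\alpha>0$ gives $H_\alpha(A)+H_\alpha(B)\ge\tfrac{1}{1-\alpha}\ln(S-1)=\mathcal B_{d2}$, i.e.\ \eqref{MajPOVM1}. For $\alpha>1$: now $t\mapsto t^\alpha$ is convex, so $g_\alpha$ is Schur convex and Theorem~1 reverses to $a+b\le S$; the inequality of arithmetic and geometric means gives $ab\le\big(\tfrac{a+b}{2}\big)^2\le\big(\tfrac{S}{2}\big)^2$, and since $1-\alpha<0$ this upper bound on $\ln(ab)$ becomes the lower bound $H_\alpha(A)+H_\alpha(B)\ge\tfrac{2}{1-\alpha}\ln\!\big(\tfrac{S}{2}\big)$, i.e.\ \eqref{MajPOVM2}. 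The value $\alpha=1$ is either read off as the limit $\alpha\to1$ or treated directly, since the Shannon entropy splits exactly over a direct sum, $H(A)+H(B)=H(p^A\oplus p^B)\ge H(W)$, with no passage from sum to product required.

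The routine ingredients are the Schur monotonicity of $g_\alpha$ and the AM--GM step; the one point that genuinely needs care is the regime $0<\alpha\le1$, where a bound on $a+b$ alone is useless (the infimum of $ab$ subject only to $a+b\ge S$ is $0$), so the constraints $a,b\ge1$ are indispensable and the small boundary minimization must actually be carried out, and throughout one must track the sign of $1-\alpha$ so that each division sends the inequality the right way. The same scheme yields the Tsallis counterpart immediately, since $T_\alpha(A)+T_\alpha(B)$ depends affinely on $a+b$ and so needs no product estimate at all.
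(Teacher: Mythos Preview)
Your argument is correct and follows essentially the same route as the paper: reduce $H_\alpha(A)+H_\alpha(B)=\tfrac{1}{1-\alpha}\ln(ab)$ to a bound on $ab$ via a bound on $a+b$ coming from Theorem~1 and Schur monotonicity of the power sum, using $(a-1)(b-1)\ge0$ for $0<\alpha<1$ and AM--GM for $\alpha>1$. The only difference is presentational: the paper imports the two intermediate inequalities $ab\ge a+b-1$ (for $\alpha<1$) and $ab\le\big(\tfrac{a+b}{2}\big)^2$ (for $\alpha>1$) by citing \cite{Rudnicki2014}, whereas you derive them explicitly, including the supporting facts $a,b\ge1$ and $W_i\in[0,1]$ (hence $S\ge2$) that make the $\alpha<1$ step go through.
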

\begin{proof}[Proof of Corollary 1]
	First, for the case $0<\alpha< 1$, the authors in \cite{Rudnicki2014} found that
	\begin{align}
	H_\alpha(A)+H_\alpha(B)\geq \frac{1} {1-\alpha} \ln \left(\sum_i (p_i^A)^\alpha+\sum_j (p_j^B)^\alpha-1\right).\nonumber
	\end{align}
Because of the Schur concavity of $\sum_i x_i^\alpha$ for $\alpha<1$, one can obtain the inequality in Equation \eqref{MajPOVM1} by using the direct-sum majorization relation we provide {in Equation {\eqref{Direct}}}. For the case $\alpha=1$, the left-hand side can be written as $-\sum_{i=1}^N z_i \ln z_i$. By applying the Schur concavity of that function, we {obtain the following bound with a form similar to the Shannon entropy}:
	\begin{align}\label{MajPOVM3}
	H_\alpha(A)+H_\alpha(B)\geq-\sum_{i=1}^N W_i\ln W_i.
	\end{align}
We note that the bound $\mathcal B_{d2}$ {reduces} to  {a form similar to the Shannon entropy in Equation {\eqref{MajPOVM3}} in the limit $\alpha \rightarrow 1$}.
	
	For the case $\alpha>1$, by applying a relation between geometric and arithmetic means, we have \cite{Rudnicki2014}
	\begin{align}
	H_\alpha(A)+H_\alpha(B)\geq \frac{2} {1-\alpha} \ln \left(\frac{\sum_i (p_i^A)^\alpha+\sum_j (p_j^B)^\alpha} {2}\right).\nonumber
	\end{align}
By using the fact that the bound is Schur concave, we can find straightforwardly the inequality~in Equation \eqref{MajPOVM2} from the direct-sum majorization relation.
\end{proof}

For the Tsallis entropy of any order $\alpha>0$, the direct-sum majorization relation {yields a unified formula for EURs}  as follows.
\begin{Corollary} For the Tsallis entropy of any order $\alpha>0$, we have
	\begin{align}\label{TEUR}
	T_\alpha(A)+T_\alpha(B)\geq \frac{1} {1-\alpha}\left(\sum_{i=1}^n W_i^\alpha -2 \right)
	\end{align}
\end{Corollary}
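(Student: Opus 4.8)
The plan is to derive \eqref{TEUR} as an immediate consequence of the direct-sum majorization relation of Theorem 1 together with the Schur concavity (resp.\ convexity) of the power sum $x\mapsto\sum_i x_i^\alpha$, in the same spirit as the proof of Corollary 1 but keeping careful track of signs so that one formula covers every $\alpha>0$. First I would rewrite the left-hand side using the definition \eqref{Tsallis}: since
\[
T_\alpha(A)+T_\alpha(B)=\frac{1}{1-\alpha}\Bigl(\sum_i (p_i^A)^\alpha-1\Bigr)+\frac{1}{1-\alpha}\Bigl(\sum_j (p_j^B)^\alpha-1\Bigr),
\]
and since the concatenated vector $z=p^A\oplus p^B$ has as its components precisely the $p_i^A$ together with the $p_j^B$, this equals $\frac{1}{1-\alpha}\bigl(\sum_{i=1}^N z_i^\alpha-2\bigr)$ with $N=n_A+n_B$. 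Hence the claim is equivalent to the inequality $\frac{1}{1-\alpha}\sum_{i=1}^N z_i^\alpha\ge\frac{1}{1-\alpha}\sum_{i=1}^N W_i^\alpha$. Note that $z$ and $W$ have equal length $N$ and equal sum ($=2$), so the majorization $z\prec W$ supplied by Theorem 1 is meaningful for these unnormalized nonnegative vectors, as already remarked after \eqref{Direct0}.

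Next I would split into the two cases dictated by the sign of $1-\alpha$. For $0<\alpha<1$ the map $x\mapsto x^\alpha$ is concave on $[0,\infty)$, so $\phi(x):=\sum_i x_i^\alpha$ is Schur concave; applying this to $z\prec W$ gives $\phi(z)\ge\phi(W)$, and since $1/(1-\alpha)>0$ the inequality is preserved under multiplication, yielding \eqref{TEUR}. For $\alpha>1$ the map $x\mapsto x^\alpha$ is convex, so $\phi$ is now Schur convex and $z\prec W$ gives the reversed inequality $\phi(z)\le\phi(W)$; but here $1/(1-\alpha)<0$, so multiplying reverses it once more and we again obtain $\frac{1}{1-\alpha}\sum_{i=1}^N z_i^\alpha\ge\frac{1}{1-\alpha}\sum_{i=1}^N W_i^\alpha$. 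Thus both regimes produce exactly the same bound \eqref{TEUR}, which is the content of the ``unified formula'' remark.

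Finally, for the borderline order $\alpha=1$, where the Tsallis entropy reduces to the Shannon entropy, I would either pass to the limit $\alpha\to 1$ in \eqref{TEUR} (using that $\frac{1}{1-\alpha}(\sum_i x_i^\alpha-1)\to -\sum_i x_i\ln x_i$) or, more directly, invoke Schur concavity of $-\sum_i x_i\ln x_i$ applied to $z\prec W$, which reproduces the Shannon-type bound $H(A)+H(B)\ge -\sum_{i=1}^N W_i\ln W_i$ appearing in \eqref{MajPOVM3}. I do not expect a genuine obstacle here: the only point requiring attention is the sign bookkeeping at $\alpha>1$, where one must use Schur convexity rather than concavity and observe that the two sign reversals — from the convexity and from the negative prefactor $1/(1-\alpha)$ — cancel, so that no case distinction survives in the final statement. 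Everything else is a direct substitution into Theorem 1.
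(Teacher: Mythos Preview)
Your proof is correct and follows essentially the same route as the paper: rewrite $T_\alpha(A)+T_\alpha(B)$ as $\frac{1}{1-\alpha}\bigl(\sum_i z_i^\alpha-2\bigr)$ with $z=p^A\oplus p^B$, then invoke Theorem~1 together with Schur concavity. The only cosmetic difference is that the paper packages your case split into the single observation that $x\mapsto\frac{1}{1-\alpha}\sum_i x_i^\alpha$ is Schur concave for every $\alpha>0$, which is exactly what your sign bookkeeping for $\alpha\lessgtr 1$ verifies.
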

\begin{proof}[Proof of Corollary 2]
	By the definition of the Tsallis entropy, the left-hand side of Equation \eqref{TEUR} can be written as
	\begin{align}
	T_\alpha(A)+T_\alpha(B)=\frac{1} {1-\alpha}\left(\sum_{i=1}^n z_i^\alpha -2 \right).
	\end{align}
By using the fact that $(\sum_i x_i^\alpha)/(1-\alpha)$ is Schur concave for $\alpha>0$, we obtain the inequality in Equation \eqref{TEUR} from the direct-sum majorization relation.
\end{proof}

\section{Comparison of Bounds}\label{sec4}

In this section, we compare the bound derived from the direct-sum majorization relation in Equation {{\eqref{Direct}}} with the previous bounds for the sum of two Shannon entropies.

\subsection{Qubit Observables}

 {As the simplest nontrivial example, let us consider a pair of qubit observables. Indeed, in this framework, the optimal EURs  have been established for the case of rank-1 PVMs in terms of Shannon~\mbox{\cite{Sanches1998,Ghirardi2003}} and R\'enyi entropies \mbox{\cite{Zozor2013}}. However, it has not been studied intensively for the case of POVMs. Thus, the goal of this section is to illustrate our bound by showing how it works for unsharp qubit observables in comparison with others. Now, let us define the following POVMs $X(\theta)$ and $Z$ }:
\begin{align}
\hat{X}_\pm(\theta)=\frac{\hat I\pm\mu(\sin\theta \hat \sigma_x+\cos\theta \hat{\sigma}_z )} {2},\;\; \hat{Z}_\pm=\frac{\hat I\pm\nu\hat \sigma_z } {2}, \nonumber
\end{align}
where $\theta$ refers to the angle between measurement directions, while $\mu$ and $\nu$ determine the unsharpness of  {measurements $X(\theta)$ and $Z$}, respectively. In this case, the majorizing vector $W$ in Equation \eqref{Direct} is given by
\begin{align}
s_1&=\frac{1+\max[\mu,\nu]} {2} ,\nonumber \\
s_2&= 1+ \frac{1} {2}\sqrt{\mu^2+\nu^2+2\mu\nu |\cos \theta|}, \nonumber \\
s_3&= \frac{3+\max[\mu,\nu]} {2}. \nonumber
\end{align}

In Figure \ref{Qubit}, all bounds derived via the majorization technique {are compared with the Maassen--Uffink bound for varying angles} $\theta$ and unsharpness parameters $\mu=\nu$. First, in Figure \ref{Qubit}a, we plot those bounds at $\mu=1$, i.e., for the case of rank-1 PVMs. This plot illustrates that our bound $\mathcal B_{d2}$ reproduces the direct-sum majorization bound $\mathcal B_{d1}$ as {claimed after Theorem 1}. In $|\theta-\pi/2|>0.15$, $\mathcal B_{d2}$ is stronger than $\mathcal B_{MU}$, while it is weaker in the other region. On the other hand, Figure \ref{Qubit}b shows that $\mathcal B_{d2}$ is the most refined bound {for the fixed unsharpness parameter} $\mu=0.8$. Our bound $\mathcal B_{d2}$ tends to be a stronger bound than others with increasing uncertainty due to measurement unsharpness. This is more clearly shown in Figure \ref{Qubit}c,d, where we plot all bounds versus $\mu$ at fixed $\theta=\pi/2, \pi/3$, respectively.  In Figure \ref{Qubit}c, $\mathcal B_{d2}$ is stronger than $\mathcal B_{MU}$ when $\mu<0.967$.
%{This is equivalent to the case where mutually unbiased bases become unsharp because of the bit-flip error with the error rate larger than 0.033.} {Performing the unsharp measurement $Z$ on a state $\hat\rho$ is equivalent to the case where the Pauli measurement $\sigma_z$ is performed on the state depolarized with probability $1-\nu$, i.e., $(1-\nu)\hat I/2+\nu\hat\rho$, and similarly for $X(\theta)$. Therefore, $\mathcal B_{d2}$ provides a stronger bound for the case where depolarization occurs with probability larger than 0.033 for $\theta=\pi/2$.}
 {Performing the unsharp measurement $Z$ is equivalent to the case where the Pauli measurement $\sigma_z$ is performed with white noise amounting to $1-\nu$} {\mbox{\cite{Busch1986,Heinosaari2015}}}, {and likewise for $X(\theta)$. Therefore, $\mathcal B_{d2}$ provides a stronger bound for the case where there exists an amount} {$1-\mu=1-\nu$} {of white noise larger than 0.033 for $\theta=\pi/2$.}
Furthermore, in the case of $\theta=\pi/3$, $\mathcal B_{d2}$ provides the most refined bound for all values of $\mu$, {as illustrated in Figure {\ref{Qubit}}d}.

\begin{figure}[t]
	\centering \includegraphics[width=0.8\textwidth]{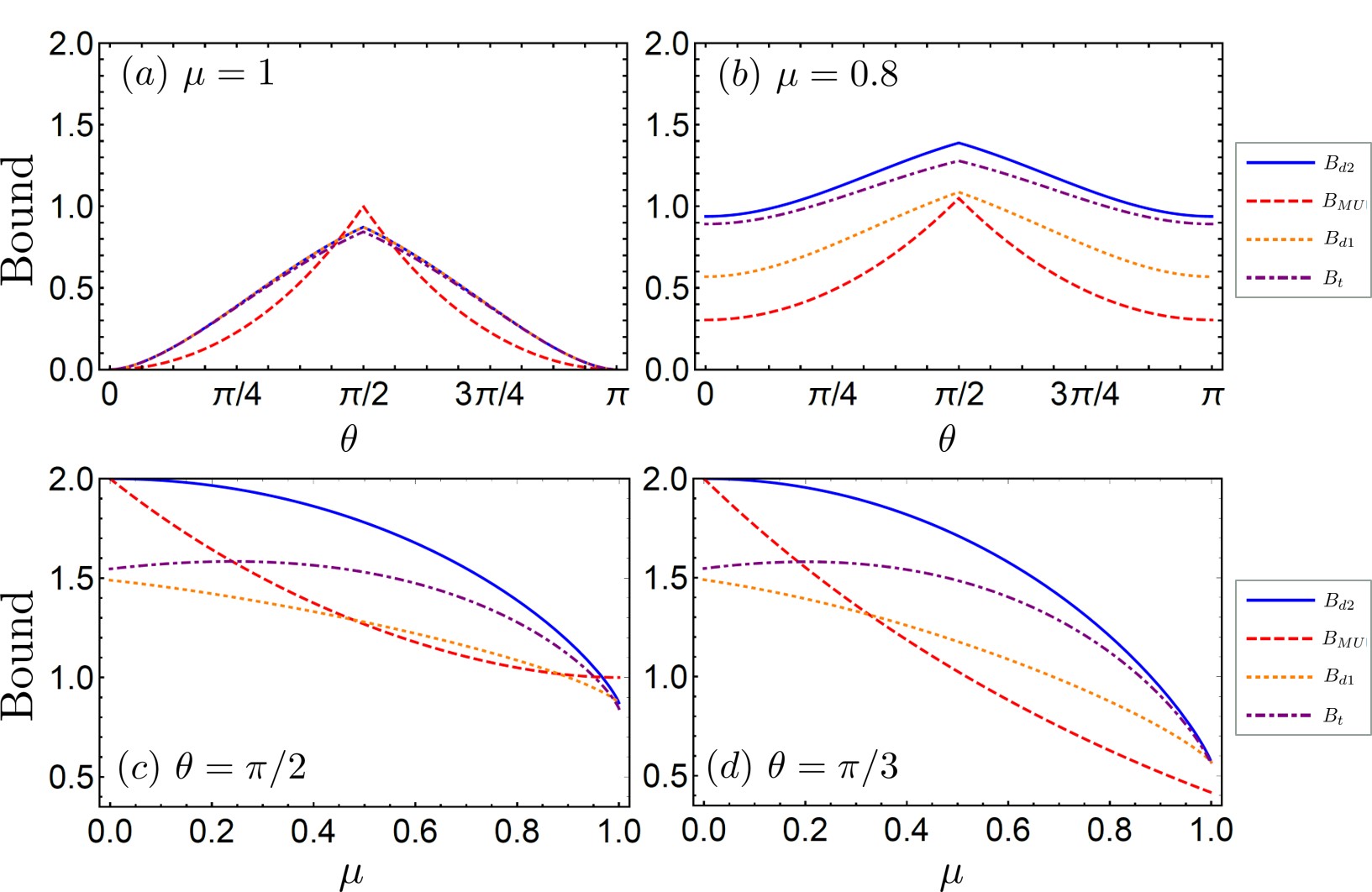}
	\caption{Bounds for the sum of two Shannon entropies rescaled to the logarithm with base 2.
	(\textbf{a},\textbf{b})~Plots of the bounds versus the angle $\theta$ at fixed unsharpness parameters (\textbf{a}) $\mu=1$ and (\textbf{b}) $\mu=0.8$; (\textbf{c},\textbf{d}) Plots of the bounds versus the unsharpness parameter $\mu$ at fixed angles (\textbf{c}) $\theta=\pi/2$ and (\textbf{d}) $\theta=\pi/3$.
	(Blue solid curves: our direct-sum majorization bound $\mathcal B_{d2}$ in Equation \eqref{MajPOVM1}; red dashed curves: Maassen--Uffink bound $\mathcal B_{MU}$ in Equation \eqref{MU}; orange dotted curves: previous direct-sum majorization bound $\mathcal B_{d1}$ in Equation \eqref{REURwd}; and purple dot-dashed curves:  tensor-product majorization bound $\mathcal B_{t}$ in Equation \eqref{tensorEUR}).}\label{Qubit}
\end{figure}

\subsection{High-Dimensional System}

As a nontrivial example in three-dimensional {systems}, let us consider orthogonal bases $\{|1\rangle, |2\rangle,|3\rangle\}$, and $\{\hat U|1\rangle, \hat U|2\rangle,\hat U|3\rangle\}$, with
\[\hat U=\left(
\begin{array} {ccc}
1/\sqrt{3}&1/\sqrt{3}&1/\sqrt{3}\\
1/\sqrt{2}&0&-1/\sqrt{2}\\
1/\sqrt{6}&\mbox{$-$}\sqrt{2/3}&1/\sqrt{6}
\end{array}
\right),\]
which was used to examine the quality of various bounds for rank-1 PVMs in \cite{Coles2014,Rudnicki2014}.
Furthermore, to apply it to the case of POVMs, we apply randomly generated $3\times3$ doubly stochastic matrices, $S^f$ and $S^g$, to each observable {so that}
\begin{align}
\hat F_i&=\sum_{k=1}^3 S^f_{ik} |k\rangle\langle k|,\\
\hat G_j&=\sum_{l=1}^3 S^g_{jl} \hat U|l\rangle\langle l|\hat U^\dagger,
\end{align}
for $i,j\in\{1,2,3\}$, which are elements of the {POVMs} $F$, $G$, respectively.

\begin{figure}[t]
	\centering \includegraphics[width=0.7\textwidth]{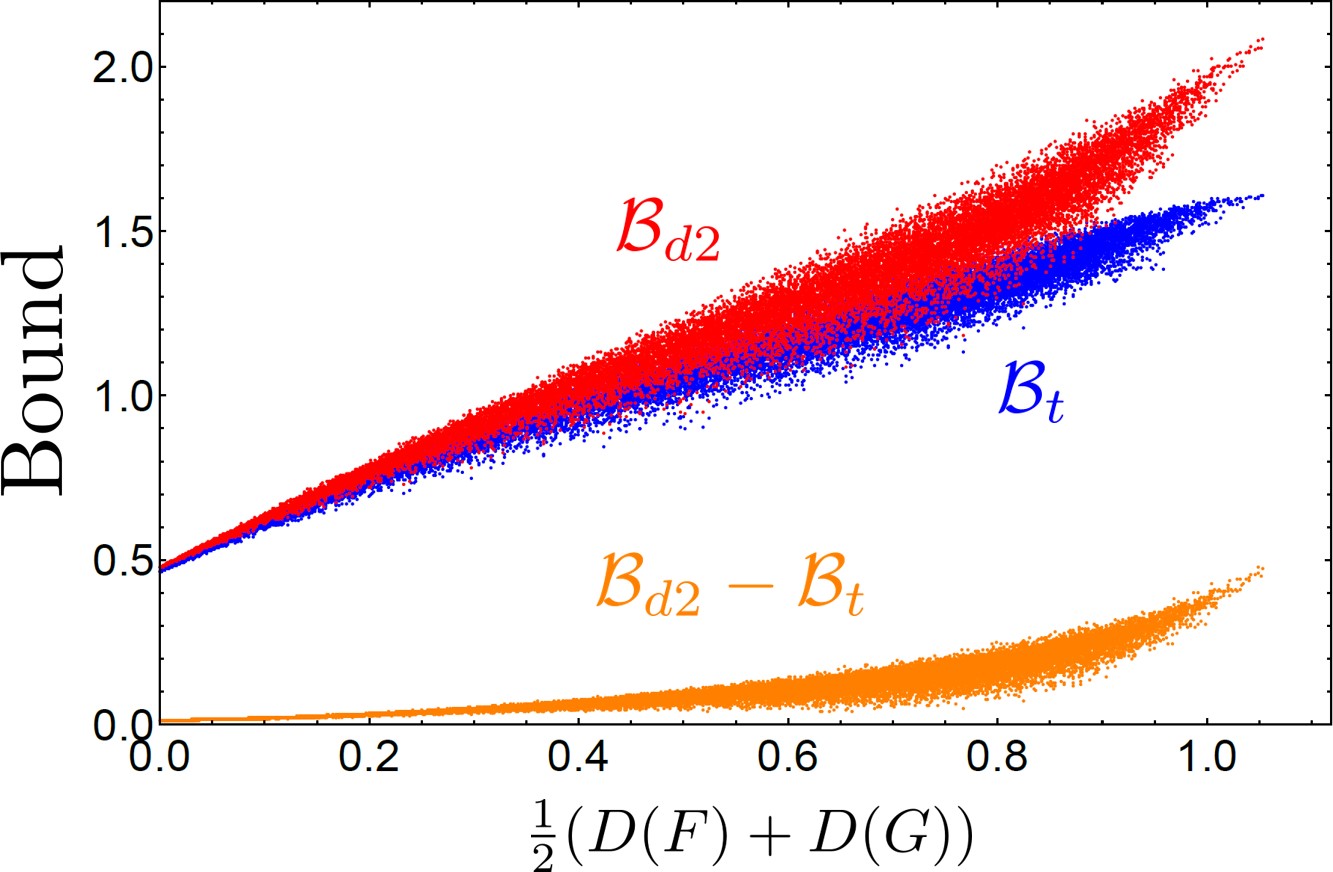}
	\caption{Plot of $\mathcal B_{d2}$ (red), $\mathcal{B}_t$ (blue), and their difference $\mathcal B_{d2}-\mathcal B_t$ (orange) versus the averaged device uncertainty $\frac{1} {2}(D(F)+D(G))$. The logarithm is taken with respect to the base $e$, where the most trivial measurement case, i.e., $\hat F_i=\hat G_j=\hat I/3$ for all $i$, $j$, coincides with the point $\frac{1} {2}(D(F)+D(G))=\ln3\sim 1.1$.}\label{POVM}
\end{figure}

For the case of rank-1 PVMs, it has been verified that $\mathcal B_{d2}$ is stronger than $\mathcal B_t$ because of the relation~in Equation \eqref{MajorDirectTensor}. To {provide numerical examples} illustrating that it is also the case for general POVMs, we compare $\mathcal B_{d2}$ with $\mathcal B_t$  for randomly generated doubly stochastic matrices in Figure \ref{POVM}. {To clearly show their difference,}  {we also exhibit the difference $\mathcal B_{d2}-\mathcal B_{t}$.}
We plot these values versus the degree of unsharpness quantified by so-called device uncertainty \cite{Baek2016-1,Baek2016-2}, $D(F)=-\text{Tr}[(\hat I/3) (-\sum_{i=1}^3 \hat F_i \ln \hat F_i)]=-(1/3)(\sum_{i,k=1}^3 S^f_{ik}\ln S^f_{ik})$, and likewise for $D(G)$. We see that $\mathcal B_{d2}$ gives better bounds than $\mathcal B_t$, as illustrated by {the difference} $\mathcal B_{d2}-\mathcal B_t$ {being} {positive} in all cases. The gap tends to be larger as the degree of unsharpness increases. This result provides evidence that $\mathcal B_{d2}$ provides a stronger bound than $\mathcal B_t$ as expected.

%%%%%%%%%%%%%%%%%%%%%%%%%%%%%%%%%%%%%%%%%%

\section{Multiple Measurements}\label{sec5}

One of the important advantages of the direct-sum majorization relation derived in Section \ref{sec3} is that it can be readily generalized to the case of an arbitrary number of $L$ measurements. For the case of multiple rank-1 PVMs, this generalization was made in \cite{Rudnicki2014}. In this section, we provide its generalization to the case of multiple POVMs.

Let us consider a collection of $L$ measurements $\{M_l\}_{l=1}^{L}$, where each measurement is described by its component operators $\{\hat M_{i|l}\}_{i=1}^{n_l}$. The probability distribution associated with {the} $l$th measurement is written as a column vector $P_{l}=(P_{1|l},...,P_{n_l|l})^T$, where $P_{i|l}=\text{Tr}[\hat\rho\hat M_{i|l}]$. With this notation, we can show the following results.

\begin{Theorem}
	For POVMs $\{M_l\}_{l=1}^{L}$, we have the majorization relation
	\begin{align}\label{DirectMulti}
	Z=\bigoplus_{l=1}^L P_l  \prec  W
	\end{align}
	where the $N$-dimensional vector $W$ is defined as
	\begin{align}
	W=(\mathcal S_1,\mathcal S_2-\mathcal S_1..., \mathcal S_N-\mathcal S_{N-1})^T
	\end{align}
	with $N=\sum_{l=1}^L n_l$, where
	\begin{align}
	\mathcal S_k:= \max_{\substack{ \mathcal R_l\\  \sum_{l=1}^L|\mathcal R_l|=k }}\left\|\sum_{l=1}^L\sum_{i\in \mathcal R_l}\hat M_{i|l}\right\|
	\end{align}
	with $\mathcal R_l \subset\{1,...,n_l\}$.
\end{Theorem}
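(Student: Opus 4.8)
The plan is to follow the argument used in the proof of Theorem 1, now carrying all $L$ measurements along simultaneously rather than just two. First I would rearrange the components of the direct sum $Z=\bigoplus_{l=1}^L P_l$ in decreasing order, writing $Z_\downarrow=(z_{[1]},\ldots,z_{[N]})^T$, and note that each $z_{[i]}$ has the form $\text{Tr}[\hat\rho\,\hat M_{i|l}]$ for some label $l$ and some outcome $i$ of the $l$th measurement. Hence, for each $k$, the first $k$ entries of $Z_\downarrow$ select, for every $l$, a subset $\mathcal{R}_l\subset\{1,\ldots,n_l\}$ of outcomes with $\sum_{l=1}^L|\mathcal{R}_l|=k$, and
\begin{align}
\sum_{i=1}^k z_{[i]}=\text{Tr}\!\left[\hat\rho\sum_{l=1}^L\sum_{i\in\mathcal{R}_l}\hat M_{i|l}\right]\leq\left\|\sum_{l=1}^L\sum_{i\in\mathcal{R}_l}\hat M_{i|l}\right\|\leq\mathcal{S}_k,\nonumber
\end{align}
where the first inequality is just the definition of the operator norm applied to the positive operator $\sum_l\sum_{i\in\mathcal{R}_l}\hat M_{i|l}$, and the second follows from the maximization defining $\mathcal{S}_k$.

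Next I would observe that $\mathcal{S}_k$ is nondecreasing in $k$: any family $\{\mathcal{R}_l\}$ of total size $k-1$ attaining $\mathcal{S}_{k-1}$ can be enlarged by one further POVM element (possible whenever $k\leq N$), and adding a positive operator never decreases the operator norm, since for positive operators $\|\cdot\|$ is the largest expectation value. Consequently, with $\mathcal{S}_0:=0$, the vector $W=(\mathcal{S}_1,\mathcal{S}_2-\mathcal{S}_1,\ldots,\mathcal{S}_N-\mathcal{S}_{N-1})^T$ has nonnegative entries and $\sum_{i=1}^k W_i=\mathcal{S}_k$ telescopes, so the displayed bound becomes $\sum_{i=1}^k z_{[i]}\leq\sum_{i=1}^k W_i$ for every $1\leq k\leq N$.

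Finally, applying the completeness relation $\sum_{i=1}^{n_l}\hat M_{i|l}=\hat I$ for each $l$ gives $\mathcal{S}_N=\|\sum_{l=1}^L\hat I\|=L$, so both $Z$ and $W$ have components summing to $L$. Combining this normalization with the partial-sum inequalities above, and invoking the remark after Equation \eqref{MajIneq} that the majorizing vector need not be listed in decreasing order, yields $Z\prec W$.

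I do not expect a genuine obstacle: the content is a bookkeeping generalization of Theorem 1. The only points that need care are verifying that the top-$k$ truncation of $Z_\downarrow$ really corresponds to an admissible family $\{\mathcal{R}_l\}$ with $\sum_l|\mathcal{R}_l|=k$ --- immediate once the selected outcomes are grouped by their measurement --- and that the completeness relations of all $L$ measurements are used together to obtain $\mathcal{S}_N=L$, which supplies the equal-sum condition required for majorization.
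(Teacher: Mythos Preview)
Your proposal is correct and follows essentially the same approach as the paper: rearrange $Z$ in decreasing order, bound the $k$th partial sum by the operator norm of the corresponding sum of POVM elements, and then by the maximum $\mathcal{S}_k$. You actually spell out two points the paper leaves implicit (the monotonicity of $\mathcal{S}_k$ and the normalization $\mathcal{S}_N=L$), but the core argument is identical.
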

\begin{proof}[Proof of Theorem 2]
	Let us define the $N$-dimensional column vector  {$Z_\downarrow=(Z_{[1]},Z_{[2]},...,Z_{[N]})^T$, which is the rearrangement of $Z:=\bigoplus_{l=1}^L P_l$ in decreasing order.} Without loss of generality, we {let $Z_{[j]} = P_{[i_j|l_j]}$ be the $j$th element of $Z_\downarrow$.} With these definitions, we can show that the sum of $Z_{[j]}$ up to the $k$th element has the upper~bound
	\begin{align}
	\sum_{j=1}^k Z_{[j]}&=\sum_{j=1}^{k} P_{[i_j|l_j]}=\sum_{j=1}^k\text{Tr}[\hat \rho \hat M_{[i_j|l_j]}]\nonumber\\
	&\leq\left\|\sum_{j=1}^{k} \hat M_{i_j|l_j}\right\|\leq\max_{\substack{ \mathcal R_l\\  \sum_{l=1}^L|\mathcal R_l|=k }}\left\|\sum_{l=1}^L\sum_{i\in \mathcal R_l}\hat M_{i|l}\right\|=\mathcal S_k\nonumber.
	\end{align}
{Similar to the proof of} Theorem 1, in the first inequality, we use the property of the operator norm, and in the second inequality, we use the fact that {$\mathcal S_k$} is obtained by finding the maximum operator norm over all {combinations} of POVM elements.
\end{proof}

The direct-sum majorization relation for multiple measurements allows us to derive EURs in terms of the Shannon entropy,
\begin{align}
\sum_{l=1}^L H(M_l) \geq -\sum_{i=1}^N W_i \ln W_i.
\end{align}
Furthermore, as noted in \cite{Rudnicki2014}, in the case of R\'enyi entropies with $\alpha<1$,  one can have
\begin{align}
\sum_{l=1}^L H_\alpha(M_l) \geq \frac{1} {1-\alpha}\ln\left(\sum_{i=1}^N (W_i)^\alpha+1-L\right).
\end{align}
by using the same method applied in the derivation of Equation \eqref{MajPOVM1}.
Also, in the case of Tsallis entropies, it is straightforward to obtain
\begin{align}
\sum_{l=1}^L T_\alpha(M_l) \geq  \frac{1} {1-\alpha}\left(\sum_{i=1}^n W_i^\alpha -L \right).
\end{align}

In the case of multiple projective measurements, the bound obtained via the direct-sum majorization relation was shown to be nontrivial in comparison with others, {as examined} in \cite{Rudnicki2014,Liu2015}.
This also implies that our method can provide significantly useful bounds in the case of multiple generalized measurements, {because our generalization includes the previous result in {\cite{Rudnicki2014}} as a particular case.}

%%%%%%%%%%%%%%%%%%%%%%%%%%%%%%%%%%%%%%%%%%

\section{Conclusions}\label{Con}

In this work, we provide the direct-sum majorization relation for generalized measurements in Equation \eqref{Direct}. As an extension of the approach in \cite{Rudnicki2014} to general POVM measurements, our direct-sum majorization relation reproduces the result of projective measurements as a special case. Furthermore, we show that our method  {yields the majorizing vector} in Equation {{\eqref{W}}}, {which is a significant improvement of the one presented in {\cite{Rastegin2016}}.}

On the basis of this direct-sum majorization, we established EURs for R\'enyi and Tsallis entropies, including the Shannon entropy. To illustrate the usefulness of our EURs,  {in the case of two POVMs, we compared our Shannon entropy UR with other known similar EURs.} First, for qubit observables, we show that our bound is stronger than other majorization bounds, while it can be complementary to the Maassen--Uffink bound. Our bound provides a significant improvement relative to  {other bounds}, particularly when the {measurement unsharpness} is significant. Secondly, in three-dimensional systems, we considered a pair of unsharp measurements generated by randomly mixing  {two different orthogonal bases.} We obtained numerical evidence exhibiting that our bound derived from direct-sum majorization is stronger than the one from the tensor-product in~\cite{Friedland2013}. {Our finding significantly extends the result known for the case of projective measurements, as rigorously proved in {\cite{Rudnicki2014}}, to general POVMs.} {Our result significantly extends the one proved in {\cite{Rudnicki2014}} from the case of projective measurements to general POVMs.}

We further extended our approach to the case of multiple POVMs via {a} direct-sum majorization relation that allows us to achieve new bounds for R\'enyi and Tsallis entropies. This extension is useful for exploring URs for the most general measurement scenario, which has so far not been studied extensively compared with the multiple projective measurements scenario. As a future work, we may establish EURs by incorporating information on the mixedness of the state to obtain a tighter bound for the case of mixed states. The recent work in \cite{Puchala2018} considered such a problem for the case of projective measurement on the basis of the idea of state purification, which can be further extended to POVM measurements as well. More broadly, it may be interesting to extend our approach to bipartite systems in which entanglement can act as a resource to reduce the amount of uncertainty in the measured~system.

%%%%%%%%%%%%%%%%%%%%%%%%%%%%%%%%%%%%%%%%%%
\vspace{6pt}

%%%%%%%%%%%%%%%%%%%%%%%%%%%%%%%%%%%%%%%%%%
%% optional
%\supplementary{The following are available online at \linksupplementary{s1}, Figure S1: title, Table S1: title, Video S1: title.}

% Only for the journal Methods and Protocols:
% If you wish to submit a video article, please do so with any other supplementary material.
% \supplementary{The following are available at \linksupplementary{s1}, Figure S1: title, Table S1: title, Video S1: title. A supporting video article is available at doi: link.}

%%%%%%%%%%%%%%%%%%%%%%%%%%%%%%%%%%%%%%%%%%
\noindent{\bf Author contributions:} K.B. conceived the problem, which was theoretically developed together with H.N. and W.S. All authors contributed to writing and proofreading the manuscript.\\

\noindent {\bf Funding: }This work is supported by the NPRP grant 8-751-1-157 from Qatar National Research Fund.\\

\noindent {\bf Abbreviations: }{The following abbreviations are used in this manuscript:\\

\noindent
\begin{tabular} {@{}ll}
{UR} & {Uncertainty Relation}\\
EUR & Entropic Uncertainty Relation\\
POVM & Positive-Operator-Valued Measure\\
PVM & Projection-Valued Measure
\end{tabular}}

% Bibliography
%-----------------------------------------------------------------

\end{document}